\colorlet{pink}{red!40}
\colorlet{blue}{cyan!60}
\colorlet{yellow}{green!60}
\def\eqref#1{equation~\ref{#1}}
\def\1{\bm{1}}
\DeclareMathAlphabet{\mathsfit}{\encodingdefault}{\sfdefault}{m}{sl}
\SetMathAlphabet{\mathsfit}{bold}{\encodingdefault}{\sfdefault}{bx}{n}
\newtheorem{thm}{Theorem}
\newtheorem{cor}{Corollary}
\newcommand*{\tikzmk}[1]{\tikz[remember picture,overlay,] \node (#1) {};\ignorespaces}
\newcommand{\boxit}[1]{\tikz[remember picture,overlay]{\node[yshift=0pt,fill=#1,opacity=.20,fit={(A)($(B)+(\linewidth,.8\baselineskip)$)}] {};}\ignorespaces}
\newcommand{\norm}[1]{\left\lVert#1\right\rVert}
\title{Byzantine-Robust and Privacy-Preserving\\ Framework for FedML}
\author{
{\rm Hanieh Hashemi}\\
ECE Department, USC\\
hashemis@usc.edu
\and
{\rm Yongqin Wang}\\
ECE Department, USC\\
yongqin@usc.edu
\and
{\rm Chuan Guo}\\
Facebook AI Research\\
chuanguo@fb.com
\AND
{\rm Murali Annavaram}\\
ECE Department, USC\\
annavara@usc.edu
% copy the following lines to add more authors
% \and
% {\rm Name}\\
%Name Institution
}
\begin{document}
\maketitle

\begin{abstract}
%-------------------------------------------------------------------------------
Federated learning has emerged as a popular paradigm for collaboratively training a model from data distributed among a set of clients. This learning setting presents, among others, two unique challenges: how to protect privacy of the clients' data during training, and how to ensure integrity of the trained model.
We propose a two-pronged solution that aims to address both challenges under a single framework.
%Federated machine learning in its most basic incarnation assumes that a set of clients collectively train a model using their private data while sending gradients to a server. A server needs to use its robustness check mechanisms to find out if there are malicious clients in the system. However, a trustworthy client may desire to obfuscate the gradients to prevent information leakage before sending them to the server, which in turn inhibits a server's ability to check each client's gradient. To break this deadlock this work proposes a two pronged solution.
First, we propose to create secure enclaves using a trusted execution environment (TEE) within the server. Each client can then encrypt their gradients and send them to verifiable enclaves. The gradients are decrypted within the enclave without the fear of privacy breaches. However, robustness check computations in a TEE are computationally prohibitive. Hence, in the second step, we perform a novel gradient encoding that enables TEEs to encode the gradients and then offloading Byzantine check computations to accelerators such as GPUs. Our proposed approach provides theoretical bounds on information leakage and offers a significant speed-up over the baseline in empirical evaluation.
%it is backed by validation experiments that demonstrate significant speedup over the baseline.       
%Secure federated learning has been studied in many works in the past couple of years with the growing concerns regarding data privacy. In addition to that, a subset of Byzantine workers in a distributed setting can change the model deliberately. In this work, we propose a practical mechanism in which Trusted Execution Environments and fast DNN accelerators collaborate to provide privacy and robustness for DNN training with negligible performance overhead. 

\end{abstract}
\section{Introduction}
Deep learning has various applications from health care and smart homes to autonomous vehicles and personal assistants. In these applications, the training data contains highly sensitive personal information such as medical condition and geographical location that the client may be unwilling to share due to privacy concerns. In recent years, federated learning has emerged as a promising solution for facilitating privacy-preserving model training over sensitive data~\citep{mcmahan2017communication}, whereby multiple parties collaboratively train a shared model by sending their gradient updates to a central parameter server without revealing their private data directly to one another. To protect against leakage of information from gradients, secure aggregation mechanisms can be utilized to obfuscate gradients passed to the parameter server so that only the aggregated gradient is revealed~\citep{bonawitz2017practical, bell2020secure, so2020scalable}.
However, these approaches suffer from the lack of integrity checks, as Byzantine clients can inject poisoned gradients to affect the joint model's accuracy and convergence~\citep{blanchard2017machine, bhagoji2019analyzing, bagdasaryan2020backdoor}. These conflicting goals of protecting training data privacy and ensuring model integrity have mostly been tackled separately in prior work.

We present a unified framework for private and Byzantine-robust federated training using a two-pronged strategy. First, we rely on secure hardware enclaves, such as Intel SGX~\citep{costan2016intel}, to process encrypted gradients from clients on the parameter server. %The data sent is decrypted within the enclave. Enclaves provide a verifiable guarantee regarding how the decrypted gradient data is used within the parameter server. %For instance, clients can use code inspection and trusted certification to verify this guarantee. Any entity outside of the enclave cannot access the gradient data in the enclave without the client's knowledge. 
In the second step, we employ robust gradient aggregation~\citep{blanchard2017machine} to detect Byzantine clients by computing pair-wise distances between gradients from different clients and removing outliers. However, such pair-wise distance computations incur a significant memory and computation cost that scales quadratically with the number of clients, prohibiting its direct use inside the secure enclave. 
%In the second step of our strategy, the enclaves are tasked with outlier detection, using techniques such as pair-wise distance computation that we explain more in section~\ref{sec:algori}. While it would be trivially simple to enable outlier detection within an enclave, without revealing the gradients outside, the computation cost of outlier detection within an enclave is prohibitive. Particularly, as the number of clients grows, the costs to compute pair-wise distances grow exponentially. Ideally, these pair-wise distance computations should be performed in algebraic accelerators, such as GPUs or TPUs, which do not support any trusted execution environments.
To tackle this issue, we propose a mechanism where the CPU-based enclave obfuscates the gradients and offloads pair-wise distance computations to GPUs (or any other accelerator). The novelty of our approach is that the enclave's gradient obfuscation strategy allows GPUs to compute pair-wise distance on encoded gradients. To summarize, our framework simultaneously provides:

\textbf{Security}: Robustness against a subset of Byzantine clients that may send inaccurate gradients to undermine model accuracy and convergence.

\textbf{Information Theoretic Data Privacy}: Clients cannot access other clients' gradients. Untrusted servers can only access encoded gradients. We provide a rigorous analysis to guarantee bounds of information leakage to be infinitesimally small.
%Even though the enclave offloads outlier detection to GPUs,  %The amount of leaked information from the protected data to each server is rigorously bound and reported. 

\textbf{Practical and Easy to Implement}: Our framework does not rely on slow cryptographic primitives, and it can support floating-point gradients, which enables efficient training and fast implementation.

%It is consequently compatible with central differential privacy. 
\textbf{Resiliency and Scalablity}: This protocol is resilient to drop-outs and new users without extra costs.
%\vskip -0.3in
\section{Problem Statement}
 %\begin{wrapfigure}{r}{0.5\textwidth}
 \begin{figure}
 \begin{center}
 \vspace{-6ex}
\includegraphics[width=0.5\textwidth]{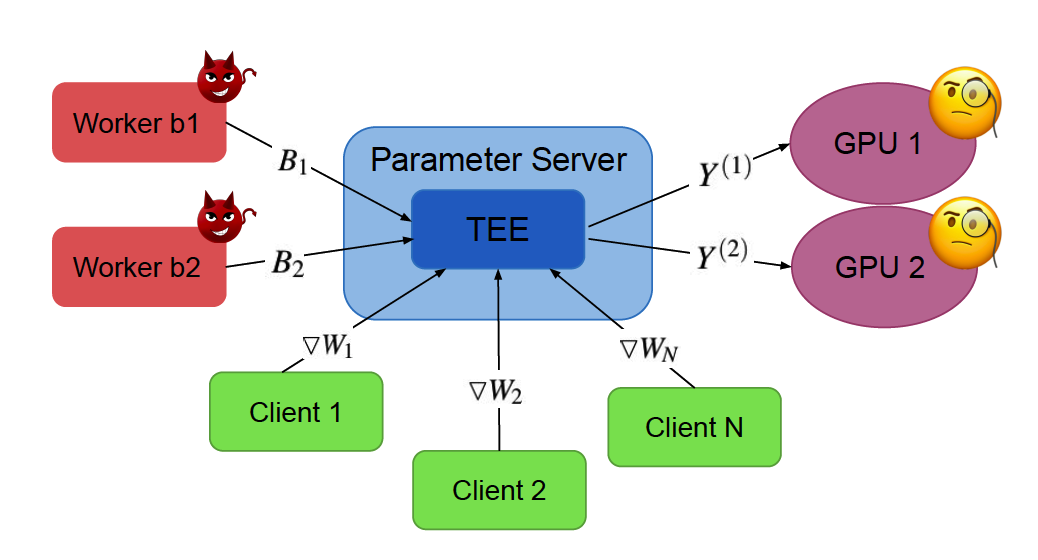}
\caption{Model Structure and Communications.}
 \vspace{-3ex}
\label{fig:overview}
 \end{center}
%\end{wrapfigure} 
\end{figure}
\textbf{System Structure}: Our system model is shown in Figure~\ref{fig:overview}. The general structure includes $N$ clients, out of which $f$ are Byzantine. The gradients that these Byzantine clients may send could be potentially disruptive to model convergence and accuracy. These clients communicate with a parameter server for gradient aggregation. The parameter server may in turn enlist additional accelerators like GPUs to perform complex computations such as outlier detection. In Figure~\ref{fig:overview} we show two GPUs that participate in outlier detection ($\text{GPU}_1, \text{GPU}_2$).

%The clients may not know the presence or communicate directly with GPUs.?? 
\textbf{Threat Model}: The threat model on server-side is classified as a traditional \textit{honest-but-curious (semi-honest)}. This means that the parameter server and the GPUs do not deviate from the agreed-upon protocol, however, they may try to glean private information from what is shared with them. Hence, if these servers receive raw gradients they may use known techniques to extract private information about the client data. To prevent gradient leakage, we assume that the parameter server uses a trusted execution environment (TEE) based enclave. 
%The code that is executed within the enclave can be verified by the client before they are willing to send raw gradient data. In particular, the client can verify what computations are performed by the enclave on the gradients, including what data the enclave may communicate to the outside world such as the accelerator GPUs. After verification, the client can also authenticate that the enclave code that runs at any instance is exactly the same code that was verified and any deviation can be detected. Note that these are basic properties of a TEE-based enclave and there is nothing unique either the client or the parameter server may need to perform to achieve these guarantees. 
We assume that the communication between the TEE parameter server and GPUs is encrypted. 
To make sure that there is a pairwise secure channel between TEE and each GPU, we can use a secret key exchange such as Diffie-Hellman~\citep{steiner1996diffie} at the beginning of the session and encrypt all the messages that leave the TEE using the secret key. The parameter server and GPU accelerators do not collude with each other. On the client side, a fraction of clients can be Byzantine. They can generate erroneous data deliberately in an attempt to sabotage the training.

%We assume there is a pairwise secure channel between TEE and distance-computing GPUs~\cite{evans2017pragmatic, cramer2005multiparty, cramer1999efficient, jang2019heterogeneous}.\\
\textbf{Information Theoretic Data Privacy}: Input privacy is guaranteed by protecting the gradient updates. Note that in our model clients only communicate raw gradients with the TEE-based parameter server. Clients cannot extract any information from other clients' data. %Clients can verify that the raw gradient data that is received by the TEE is not exposed to the GPUs directly. 
TEEs will encode data using the novel noise addition scheme described in the next section which is then exposed to GPUs. The amount of leaked information from encoded data to the distance computing GPU servers is rigorously bounded by the noise power regardless of the adversary's computation power. %I add it to the appendix, do you think it's better if we keep it in the main body of the paper?

\textbf{Robustness}: For robustness, we provide ($\alpha,f$)-Byzantine Resilience that is proposed in~\citep{blanchard2017machine} that can tolerate up to $f$ Byzantine clients if $N\geq 2f+3$.

\section{Algorithm}
\vskip -0.2cm
\label{sec:algori}
% We describe our two-pronged strategy to protect the privacy of client data while enabling outlier detection for robustness. We explain how each of these two goals is achieved in our unified framework.  
 \subsection{Protecting Client Data Privacy}
 \label{sec:privacy1}
Prior works demonstrated that raw gradient can leak information about a client's training data~\citep{geiping2020inverting,zhu2019deep}. Hence, to protect data privacy, untrusted hardware on the server-side should not observe the individual gradients. However, observing these raw gradients is critical for detecting malicious clients~\citep{blanchard2017machine,mhamdi2018hidden,chen2017distributed,yang2019byzantine,yin2018byzantine,peng2020byzantine,pan2020justinian}. To solve this challenge, the parameter server is instantiated as a secure enclave using TEE. %The client can verify how the TEE uses the gradients. In particular, the client can also verify when a gradient leaves the TEE for further processing these are obfuscated within the TEE before being released to any external entity. The client can also compute the theoretical bounds of information leakage in these obfuscated gradients. Once the verification is complete, the hardware guarantees to the client (through secure hashing) that only the verified code is being executed at any time and any deviation is immediately observed by the client. Note that the client does not need to know how exactly the outlier detection is performed, or for that matter, it may not even know that outlier detection is being performed externally on a GPU. The client is only concerned about how the TEE obfuscates gradients and the impact of this obfuscation on the information leakage. As such, the server is entirely free to choose its own outlier detection scheme, as it seems fit, without revealing the details to the client.   
Each client encrypts its gradients using a mutually agreed-upon key between each client and TEE. Then each client communicates the encrypted gradients with the TEE. The gradients are decrypted within the TEE to get the plain text gradient values. The goal of the TEE is to aggregates the gradients using a robust aggregation function, which detects any outliers, as we explain in Section~\ref{sec:robust}.  %and can updates the model stored in the parameter server.
 \subsection{Robust Aggregation}
  \label{sec:robust}
 \begin{figure}[tbp]
  \centering
 \begin{subfigure}[b]{0.45\textwidth}
\includegraphics[width=\textwidth]{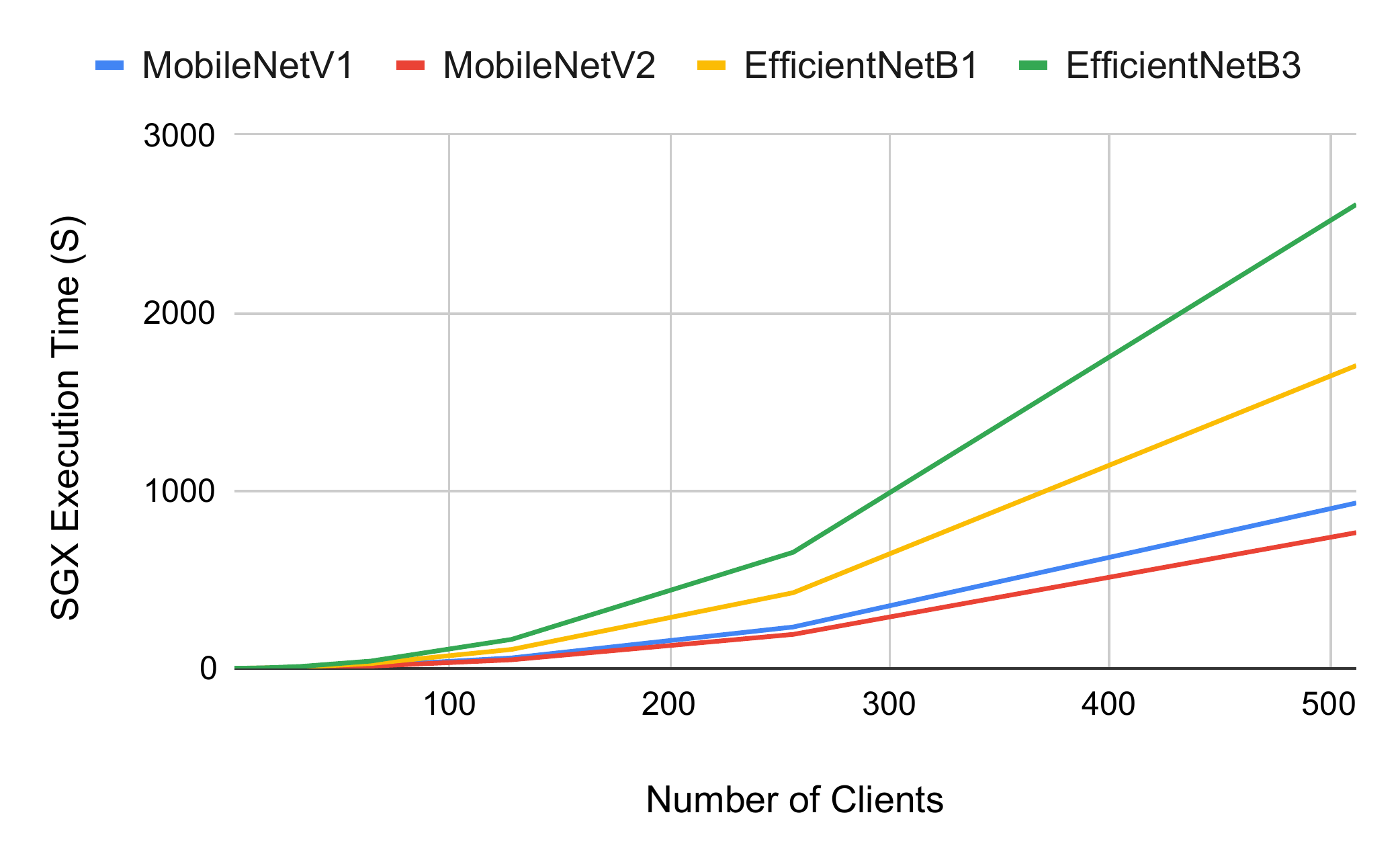}
%\caption{SGX pair-wise distance computation time grows exponentially by increasing number of clients for various networks.}
\label{fig:CPU}
\end{subfigure}
\begin{subfigure}[b]{0.45\textwidth}
\includegraphics[width =\textwidth]{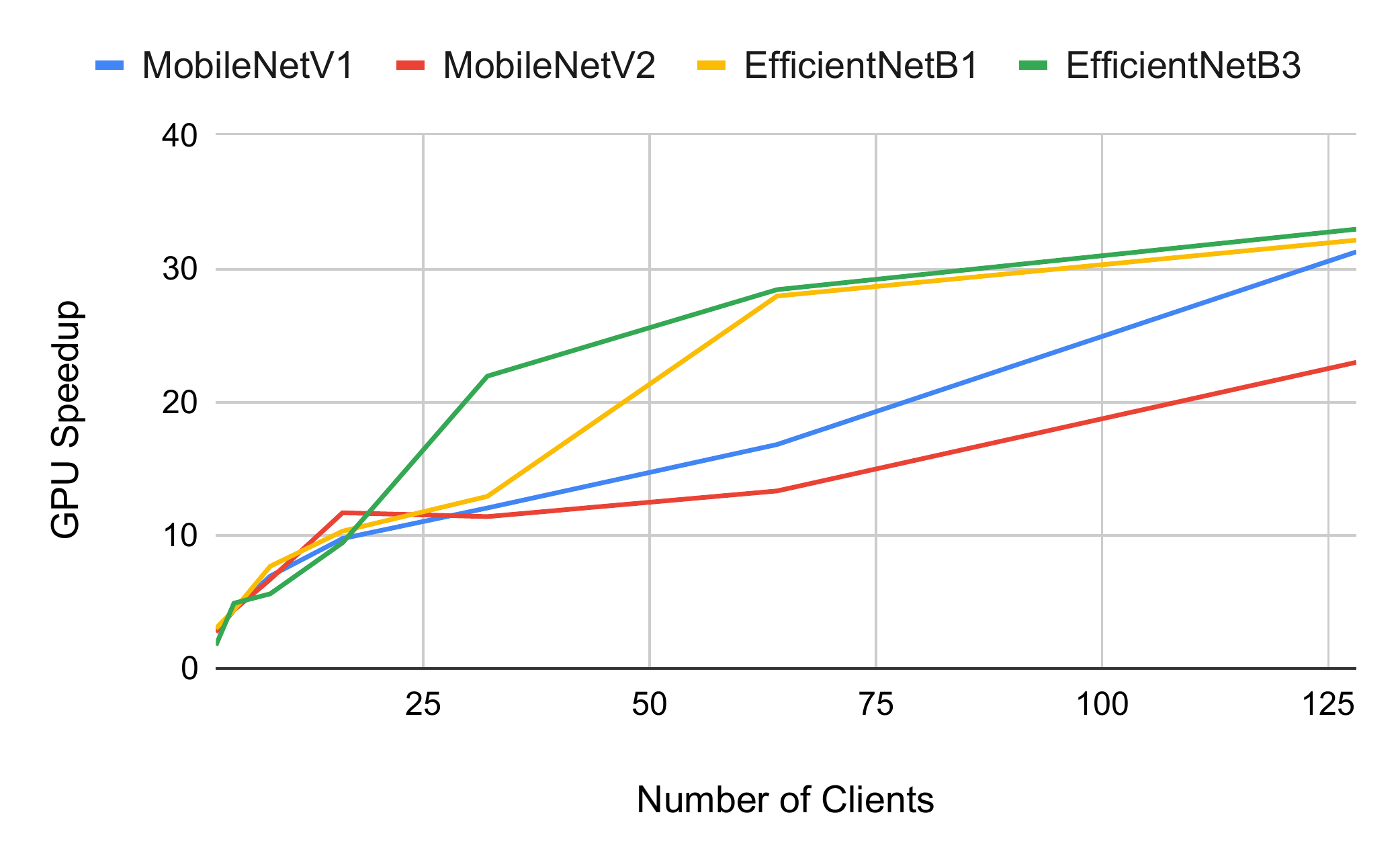}
%\caption{GPU speedup relative to CPU for computing pair-wise distances with different number of clients in various networks.}
\label{fig:CPUVSGPU}
\end{subfigure}
\vskip -0.75cm
\caption{(a) SGX pair-wise distance computation time grows quadratically by increasing number of clients for various networks (b) GPU speedup relative to CPU for computing pair-wise distances with different number of clients in various networks.}
\label{fig:CPU2}
\vskip -0.2cm
\end{figure} 
In traditional federated averaging, Byzantine clients can modify the weight update to the desired direction. To prevent such malicious attempts, different robustness algorithms have been proposed. Many robustness schemes, such as  Multi-Krum~\citep{blanchard2017machine} which we use in this work, identify Byzantine clients by computing pair-wise distances on gradients. However, computing pair-wise distances require gradients to be made visible to the parameter server. In our approach, we enable the parameter server to compute pair-wise distances using encoded gradients which bound the information leakage and yet allow GPUs to identify outliers efficiently. 
%In this work, we used Multi-Krum\cite{blanchard2017machine} as our mechanism for providing robustness. 

GPUs are ideally suited for the highly data intensive and vectorized computations of outlier detection. Figure~\ref{fig:CPU2}b shows the speed up for pairwise-distance computations of the same networks on a consumer grade GPU GTX1080Ti relative to TEE. As demonstrated in the figure, when the number of clients grows, GPU shows a significant speedup. Therefore, to overcome TEE's performance bottleneck, a collaboration between GPUs and TEE is proposed. In our approach, TEE protects the data confidentiality by encoding gradients, while Multi-Krum computations are offloaded to the untrusted GPUs. %In particular, gradients are first encoded inside TEE, to guarantee they do not expose the raw gradients to the untrusted GPUs. 
The novelty of our encoding is that it is designed in a way that does not affect the accuracy of outlier detection on GPUs. Now, we describe the encoding process in detail.

\textbf{Gradient Encoding}: 
We denote the gradient update that is sent by $\text{client}_i$ as $\triangledown W_i \in \mathbb{R}^d$. In a system with $N$ clients, the goal is to compute the score of the nodes as a measure of how much we trust its gradient update. In order to do so, at the first step, we need to compute the pairwise distances between every two gradient updates $Dist_{i,j} = \norm{\triangledown W_i-\triangledown W_j}^2 \quad \forall i,j \in {1,...,N}$. But for protecting data, we should not reveal the gradients directly to the GPUs. We achieve this goal by adding random noise signals  ($R \in \mathbb{R}^d$) to the clients' gradients.

\textbf{Noise Generation}:
The noise signals are generated at the offline phase within the TEE and one noise signal is generated per each clients' update. As we explain in the next section, the novelty of our outlier detection relies on one restriction placed noise generation. Namely, the $L_2$ norm between any two noise signals generated must be constant ($C$), i.e. $\norm{R_i-R_j}^2 = C$. This restriction allows the TEE to offload distance calculations to GPUs using encoded data while enabling the TEE to decode the distances. Note that for efficiency reasons, the noise can be pre-generated before starting the training process. When there are a large number of clients and insufficient memory within TEE to hold all the random noises, these noises can be encrypted and stored in the external memory that is outside of the TEE. Prior to the outlier detection, the encoding process follows these steps:
  %\begin{enumerate}[]
      1) Fetch an encrypted random noise vector ($\bar{R_i}\in \mathbb{R}^d$). 
      2) Decrypts the noise vector ($R_i$) within the TEE. 
      3) Compute $Y_{i}^{(1)} = \triangledown W_i+R_i$ and  $Y_{i}^{(2)} = \triangledown W_i-R_i$.  
      4) Encrypt $Y_{i}^{(1)}$ using Diffie Hellmen key of $\text{GPU}_1$ and sends it to the $\text{GPU}_1$. 
      5) Encrypt $Y_{i}^{(2)}$ using Diffie Hellmen key of $\text{GPU}_2$ and sends it to the $\text{GPU}_2$.
  %\end{enumerate}
 The encoding function repeats the above procedure for all the clients' gradients. 
 %Note that while the above encoding assumes the presence of only two GPUs, the encoding can easily scale to more GPUs. 
 %professor: This example is not correct since all clients need to compute pairwise distance will all other
 %For example, in the presence of four GPUs the encoding process may send encoded data from half the clients to two GPUS, while the data from the remaining half will be encoded and sent to the other two GPUs. 
 %For example, it is possible to split the gradient vector into multiple pieces and then send the corresponding encoding pieces to different GPUs.      
 
\textbf{Distance Computation on Encoded Data}:
At this step, each GPU has a masked share of each client's gradient and computes the pair-wise distances between them. In other words, for every client $i$ and $j$, $\text{GPU}_1$ computes Equation~\ref{eq:distance} and Similarly, $\text{GPU}_2$ computes Equation~\ref{eq:distance2}:
 \begin{align}\label{eq:distance}
 &\text{Dist}_{i,j}^{(1)} = \norm{Y_{i}^{(1)}-Y_{j}^{(1)}}^2= \norm{\triangledown W_i-\triangledown W_j}^2 + \norm{R_i-R_j}^2 + 2(\triangledown W_i-\triangledown W_j)^T(R_i - R_j)
 \end{align} 
 \begin{align}\label{eq:distance2}
 &\text{Dist}_{i,j}^{(2)} = \norm{Y_{i}^{(2)}-Y_{j}^{(2)}}^2= \norm{\triangledown W_i-\triangledown W_j}^2 + \norm{R_i-R_j}^2 - 2(\triangledown W_i-\triangledown W_j)^T(R_i - R_j)
 \end{align}
\textbf{Decoding Pair-Wise Distances}:
The pair-wise distance computations from GPUs are then returned to the TEE based parameter server. TEE collects and aggregates the partial distances from the GPUs. 
 \begin{align}\label{eq:coll}
\text{Dist}_{i,j} = \text{Dist}_{i,j}^{(1)} +\text{Dist}_{i,j}^{(2)} = 2 \norm{\triangledown W_i-\triangledown W_j}^2 + 2 \norm{R_i-R_j}^2
 \end{align}
 
The above equation has an extra offset of $\norm{R_i-R_j}^2$ in addition to the distance computations. The critical observation here is that outlier detection \textit{does not need} the exact value of distances for selecting outliers. Therefore, if $\norm{R_i-R_j}^2 = C$ where $C$ is a constant for all $i,j$, the relative difference in $L_2$ distance between gradients is preserved and outlier detection works without any limitation. These noise signals can be generated in the offline phase by TEE, encrypted, and stored in the parameter server. Hence, the noise generation overhead is minimized to the noise decryption.
 
\textbf{Gradient Aggregation Rule}: 
After decoding the distances, for each client $i$, its $(N-f-2)$ closest distances are selected (\text{Sort($\text{Dist}[i])\quad \forall i \in {1,..,N}$}). These distances are summed up for these close neighbors to determine that client's score, \text {score($\text{client}_i$) = $\sum_{j=1}^{N-f-2} Dist_{i,j}$}. The lower the score is the closer gradient is to the majority of clients. Top $K$ gradients are selected to update the model. After choosing $K$ clients with the lowest scores, finally the TEE can combine their gradient updates to generate the updated model for the next iteration of model training. 

\textbf{Formal Bounds on Information Leakage}: Our gradient encoding based on noise addition leaks a small amount of information from the client's gradient, which we measure using information-theoretic notions as in prior work~\citep{aliasgari2013secure, guo2020secure}. The exact amount of information leakage in relation to the noise magnitude follows as a corollary of the classic Gaussian channel coding theorem~\citep{cover1999elements}; please see appendix for a derivation.
\begin{cor}
Suppose $\triangledown W_1,\ldots,\triangledown W_N$ are independent gradient vectors. For any $i=1,\ldots,N$, the maximum amount of information each $GPU_j\quad j=1,2$ can retrieve about $\triangledown W_i$ is bounded by:
\begin{align}
    I(\triangledown W_i;Y_1^{(j)},\dots, Y_N^{(j)}) = I(\triangledown W_i;Y_i^{(j)})  \leq \sum_{k=1}^d \frac{1}{2} \log \left(1+\frac{\text{Var}(\triangledown (W_i)_k)}{\sigma^2}\right),
\end{align}
where $Y_i^{(1)}= \triangledown W_i+R_i$,  $Y_i^{(2)}= \triangledown W_i-R_i$ and $R_i\sim \mathcal N(0,\sigma^2 I_d)$.
\end{cor}

\section{Experiments}
%\begin{wrapfigure}{r}{0.5\textwidth}
\begin{figure}
\begin{center}
\includegraphics[width = 0.40\textwidth]{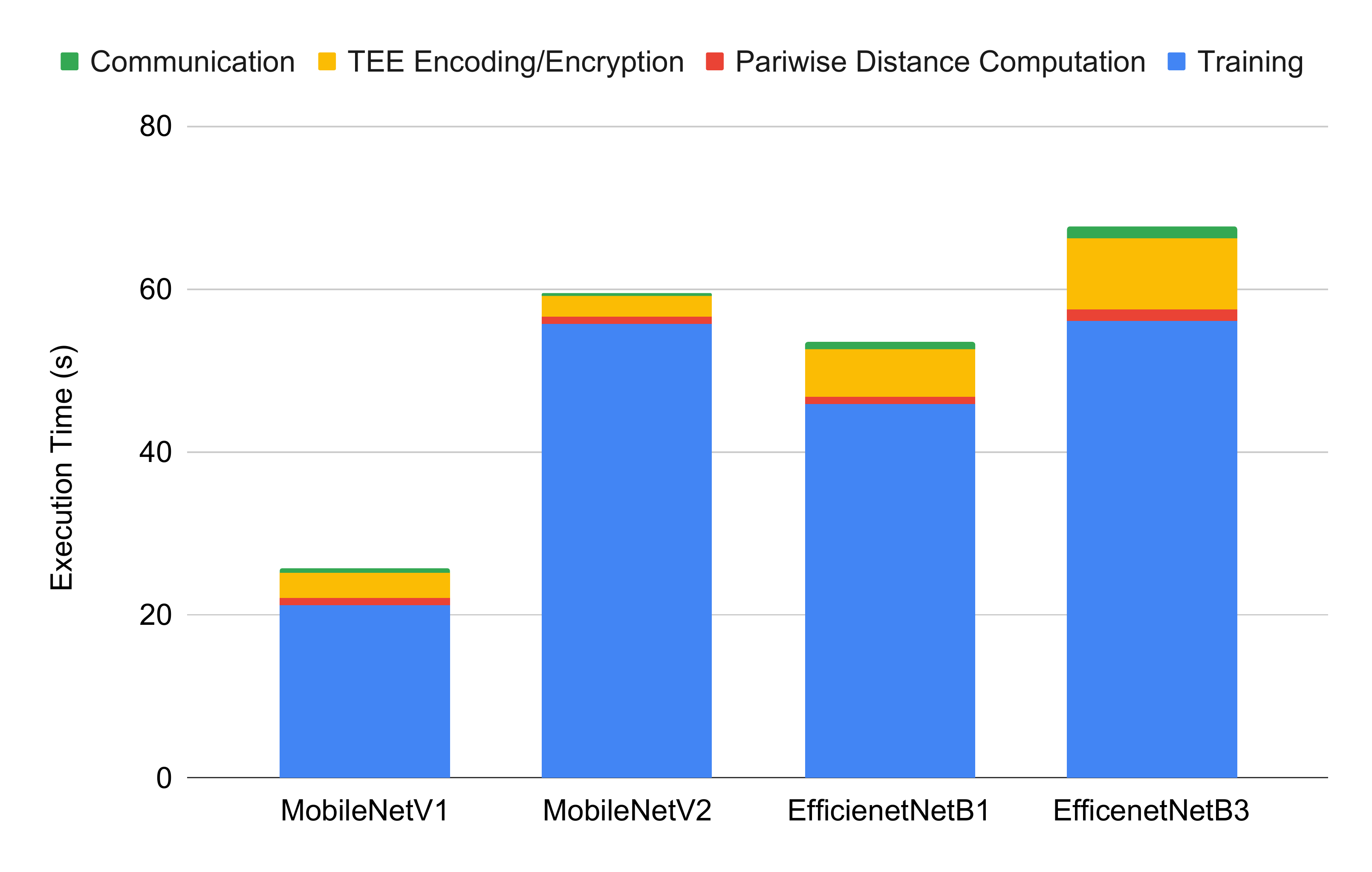}
\caption{Execution time breakdown of CIFAR-10 for different networks.}
\label{fig:breakdown}
\end{center}
\end{figure}
%\end{wrapfigure} 
We implemented the proposed design on a GPU-enabled TEE server. Our server consisted of an Intel(R) Coffee Lake E-2174G 3.80GHz processor and two Nvidia GeForce GTX 1080 Ti GPUs. The server has 64 GB RAM and supports Intel Soft Guard Extensions (SGX). For our GPU implementations, we used PyTorch 1.7.1 and Python 3.6.8. To evaluate the performance of our model, we designed a baseline that implements the pair-wise distance computations entirely within Intel SGX, which protects the data privacy but could be prohibitively slow as shown in Figure \ref{fig:CPU2}. We evaluate our method by training two neural network architectures, MobileNet~\citep{sandler2018mobilenetv2,howard2017mobilenets} and EfficientNet~\citep{tan2019efficientnet},  on CIFAR-10, similar to prior works~\citep{li2019survey,malekzadeh2021dopamine}. %MobileNetV1 reduces linear operations considerably using depth-wise separable convolution, thereby is suitable for running on edge devices with low computational power. In MobilenetV2, they improve the performance further by reducing the number of channels. EfficientNet is a small classification network. This network tries to balance the computation demand with some accuracy tradeoffs and hence has fewer parameters with some classification accuracy compromise. EfficientNetB0 is the smallest network in the EfficientNet family. A novel scaling method, introduced in~\cite{tan2019efficientnet}, expands EfficientNetB0 into larger and more capable networks such as EfficientNetB3 and EfficientNetB4. As illustrated in Table~\ref{tab:inf}, these DNNs improve model accuracy, with smaller model sizes than ResNet and VGG models.\\

%%%%THIS RESULT WILL BE UPDATED
\textbf{Execution Time}:
Figure~\ref{fig:breakdown} depicts the execution time breakdown for one training epoch in a system with 64 clients. For all of the four networks, training time on the edge device dominates the overall execution time. The overhead of the pairwise distance computation on GPU and LAN communication cost using a 1GBps switch is very low, and the main overhead of our scheme comes from the encoding and encryption cost inside the CPU. This cost is measured when using just 1 CPU thread and it can be further improved when using multiple threads in SGX or using pipelining techniques.
\section{Conclusion}
In this paper, we introduced a unified framework for privacy preserving and Byzantine client detection in federated learning. Gradient based information leakage is prevented by sending the raw gradients to a TEE. The code executed within the TEE can be verified by the client. The TEE in turn is responsible for eliminating Byzantine client updates using pair-wise distance computations. But such robustness checks are computationally intensive. Hence, the TEE utilizes GPU hardware for performing this computation. Rather than revealing raw gradients to the GPU our scheme encodes the gradients with random noise, that satisfy a specific criteria. The pair-wise distance computations using gradients encoded with random noise can then be easily decoded to identify outlier clients. Therefore, our scheme relies on the collaboration between TEE, GPU and data encoding process. We designed and implemented our scheme on a realistic server settings and demonstrate more than 6X speedup compared to running the outlier detection entirely within the TEE. 
\newpage

\onecolumn{
\bibliography{iclr2021_conference}
\bibliographystyle{iclr2021_conference}
}

\setcounter{cor}{0}
\newpage
\appendix
\section{Appendix}
\subsection{Information Theoretic Data Privacy}
\label{sec:privacy}
In this section, we measure the amount of information the adversary can potentially gain about the raw gradients from the encoded gradients even if the adversary has access to an unlimited computation power. Information-theoretic privacy is the strongest privacy measurement since it does not assume any limitation on the adversary's computational power. The amount of information leaked by $\triangledown W_i+R_i$'s about $\triangledown W_i$ is the \textbf{mutual information (MI)} between these two sets of variables, defined by~\citep{cover1999elements}:
\begin{align}
    I(\triangledown W_i; \triangledown W_i+R_i)=H(\triangledown W_i)-H(\triangledown W_i |\triangledown W_i+R_i)~.
\end{align}
Here, $ H(\cdot)$ denotes the Shannon entropy. Note that the information that adversary can potentially learn about $\triangledown W_i$ by having $\triangledown W_i+R_i$ is fundamentally bounded by  $I(\triangledown W_i ; \triangledown W_i+R_i)$, which is a well-known information-theory concept called parallel Gaussian channel capacity~\citep{cover1999elements}.\\

\begin{thm}[Section 9.4 of~\cite{cover1999elements}]\label{thm:bound_sum}
Assume that each $X_i\sim P_{X_i}$ is a random variable with bounded variance $\text{Var}(X_i)$, and $R_i\sim \mathcal N(0,\sigma_i^2)$ are independent Gaussian random variables with variance $\sigma_i^2$ and mean $0$. Then:
\begin{align}
    I(X_1, X_2,..., X_n;X_1+R_1,X_2+R_2,...,X_n+R_n) \leq\sum_{i=1}^n \frac{1}{2} \log \left(1+\frac{\text{Var}(X_i)}{\sigma^2_{i}}\right).
\end{align}
\end{thm}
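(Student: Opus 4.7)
The plan is to prove the bound via the standard entropy decomposition for parallel Gaussian channels. Writing $\vec{X}=(X_1,\dots,X_n)$, $\vec{R}=(R_1,\dots,R_n)$, and $\vec{Y}=(X_1+R_1,\dots,X_n+R_n)$, I would start from
\[
I(\vec{X};\vec{Y}) \;=\; h(\vec{Y}) - h(\vec{Y}\mid \vec{X}).
\]
Since $\vec{Y}=\vec{X}+\vec{R}$ and $\vec{R}$ is independent of $\vec{X}$, the conditional entropy reduces to $h(\vec{R})$, and since the $R_i$ are mutually independent Gaussians, $h(\vec{R})=\sum_{i=1}^n h(R_i)=\sum_i \tfrac{1}{2}\log(2\pi e\sigma_i^2)$.

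Next, I would handle $h(\vec{Y})$ by subadditivity of differential entropy: $h(\vec{Y}) \leq \sum_{i=1}^n h(Y_i)$. Substituting this into the decomposition yields
\[
I(\vec{X};\vec{Y}) \;\leq\; \sum_{i=1}^n \bigl(h(Y_i) - h(R_i)\bigr) \;=\; \sum_{i=1}^n I(X_i;Y_i),
\]
so the multivariate mutual information is at most the sum of per-coordinate mutual informations. This is the key reduction step, and I expect it to be the cleanest part; the main subtlety is simply checking that the $R_i$ being jointly independent (not just pairwise) is what licenses factoring $h(\vec{R})$ into the sum.

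For each scalar term $I(X_i;Y_i)=h(Y_i)-h(R_i)$, I would invoke the maximum-entropy property of the Gaussian: among all distributions with a fixed variance, the Gaussian maximizes differential entropy. Since $\mathrm{Var}(Y_i)=\mathrm{Var}(X_i)+\sigma_i^2$ (using independence of $X_i$ and $R_i$), we get $h(Y_i)\leq \tfrac{1}{2}\log\bigl(2\pi e(\mathrm{Var}(X_i)+\sigma_i^2)\bigr)$, and subtracting $h(R_i)=\tfrac{1}{2}\log(2\pi e\sigma_i^2)$ gives the standard Gaussian-channel bound $I(X_i;Y_i) \leq \tfrac{1}{2}\log(1+\mathrm{Var}(X_i)/\sigma_i^2)$.

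The main obstacle, such as it is, lies in justifying the maximum-entropy inequality in the setting where $P_{X_i}$ need not be Gaussian and may even be discrete (so that $h(X_i)$ could be $-\infty$); the argument needs $h(Y_i)$ to be well-defined, which is guaranteed because $Y_i$ is the convolution of an arbitrary distribution with a Gaussian density and hence has a density whose entropy is finite under the bounded-variance assumption. Once this is in place, summing the per-coordinate bounds produces
\[
I(\vec{X};\vec{Y}) \;\leq\; \sum_{i=1}^n \tfrac{1}{2}\log\!\left(1+\tfrac{\mathrm{Var}(X_i)}{\sigma_i^2}\right),
\]
which is exactly the claim. No dependence or joint-distribution assumption on the $X_i$ is needed; subadditivity absorbs any correlation between them.
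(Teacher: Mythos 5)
Your proof is correct and is precisely the standard parallel-Gaussian-channel argument from Section 9.4 of Cover and Thomas, which the paper invokes by citation rather than proving itself: the decomposition $I = h(Y_1,\dots,Y_n) - h(R_1,\dots,R_n)$, subadditivity of differential entropy, and the per-coordinate Gaussian maximum-entropy bound with $\mathrm{Var}(Y_i)=\mathrm{Var}(X_i)+\sigma_i^2$. Your side remarks are also right: no independence among the $X_i$ is needed because subadditivity absorbs their correlations, and the only unstated hypothesis you must (and do) supply is that the noise vector is independent of the inputs.
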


We utilize this theorem in the following corollary to bound the information leakage in our framework. Note that since the two GPUs are independent and non-colluding, their information gain is independent of each other.

\begin{cor}
Suppose $\triangledown W_1,\ldots,\triangledown W_N$ are independent gradient vectors. For any $i=1,\ldots,N$, the maximum amount of information each $GPU_j\quad j=1,2$ can retrieve about $\triangledown W_i$ is bounded by:
\begin{align}
    I(\triangledown W_i;Y_1^{(j)},\dots, Y_N^{(j)}) = I(\triangledown W_i;Y_i^{(j)})  \leq \sum_{k=1}^d \frac{1}{2} \log \left(1+\frac{\text{Var}(\triangledown (W_i)_k)}{\sigma^2}\right),
\end{align}
where $Y_i^{(1)}= \triangledown W_i+R_i$,  $Y_i^{(2)}= \triangledown W_i-R_i$ and $R_i\sim \mathcal N(0,\sigma^2 I_d)$.
\end{cor}

Hence if the gradient vector has a bounded variance\footnote{This can be achieved by clipping the coordinates to a bounded range.}, the mutual information bound can be made arbitrarily small by choosing a noise variance that is orders of magnitude larger than that of the gradient vector.

%(no limitation on increasing it to be orders of magnitude larger).
\subsection{Noise Generation}

Our goal is to generate $N$ random vectors, such that $\norm {\hat R_i-\hat R_j}^2 = C$ for all $i,j$. We generate such random vectors as follows,

\begin{enumerate}
    \item Generate $N$ i.i.d zero-mean Gaussian vectors ($R_i,...,R_N$), where the variance of each entry is $\sigma^2$.
    \item Orthogonalize them using the Gram-Schmidt process~\citep{daniel1976reorthogonalization}. This gives us
    $\bar{R}_1,...,\bar{R}_N$ such that $\forall i,j \quad \bar{R}_i^T\bar{R}_j=0 $
    \item Scale all the vectors so they have the same norm. $\hat{R}_1=\frac{\sqrt C\cdot \bar{R}_1}{\sqrt 2 \norm{\bar{R}_1}},..., \hat{R}_N=\frac{\sqrt C\cdot \bar{R}_N}{\sqrt 2\norm{\bar{R}_N}}$ such that $\norm{\hat{R}_i}=\sqrt{C/2}$ for all $i$.
\end{enumerate}

Now after this procedure, we have
\begin{align}
    \forall i,j: \norm{\hat{R_i}-\hat{R_j}}^2 = \norm{\hat{R_i}}^2+\norm{\hat{R_j}}^2-2\hat{R_i}^T\hat{R_j}=C
\end{align}
Note that this procedure is equivalent to choosing the first $N$ rows of a random orthogonal matrix, and scaling the rows to have the desired norms. Therefore, each $\hat R_i$ has a uniformly random direction in the $d$ dimensional space, same as a Gaussian vector. The only difference between $\hat R_i$ and the initial Gaussian vectors $R_i$ is that its norm has been fixed to be $C$. However, as $d$ grows, the i.i.d. Gaussian vectors $(R_1,\ldots,R_N)$ converge to orthonormal vectors with a fixed norm $\sigma \sqrt{d}$, so the noise generation process roughly reduces to sampling $N$ i.i.d. Gaussian vectors with $\sigma = \sqrt{C / 2d}$.
%But note that for a Gaussian vector, $R\in\mathbb{R}^d$, $\norm R$ converges to $\sigma\sqrt d$ as $d$ grows. Thus as long as $C>2\sigma^2 d$, the entropy of each $\hat R_i$ will be larger than the entropy of a Gaussian vector, asymptotically. In the other words, $H(\hat R_i)\geq H(R_i)$. Therefore, masking the signals with the vectors $\hat R_i$'s, leaks less information compared to masking with $R_i$'s, which can be bounded by Theorem 1.
%\input{iclr21-secml-workshop/2Appendix}

\end{document}